
\documentclass[conference,letterpaper]{IEEEtran}

\addtolength{\topmargin}{9mm}

%
%
\usepackage[utf8]{inputenc} 
\usepackage[T1]{fontenc}
\usepackage{url}
\usepackage{ifthen}
\usepackage{cite}
\usepackage[cmex10]{amsmath} 


\usepackage{amsfonts, amsmath, amssymb, amsthm}
\usepackage{array, makecell}
\usepackage{tikz, color, soul}
\usepackage{pgfplots}
\usetikzlibrary{plotmarks}
\usepackage{mathtools}

\interdisplaylinepenalty=2500 

\newtheorem{theorem}{Theorem}
\newtheorem{proposition}{Proposition}
\newtheorem{lemma}[theorem]{Lemma}

\newtheorem{example}{Example}

\theoremstyle{definition}
\newtheorem{definition}{Definition}

\theoremstyle{remark}
\newtheorem{remark}{Remark}
\newtheorem{construction}{Construction}

\newcommand{\cC}{{\mathcal C}}

\newcommand{\N}{{\mathbb N}}

\newcommand{\F}{\mathbb{F}}

\renewcommand{\epsilon}{\varepsilon}

\newcommand{\pr}[1]{\operatorname{Pr}\left({#1}\right)}
\newcommand{\cpr}[2]{\operatorname{Pr}\left({#1}\mid{#2}\right)}
\newcommand{\expect}[1]{\operatorname{E}{#1}}

\begin{document}
	\title{Constructions of Batch Codes via Finite Geometry}
	\author{%
		\IEEEauthorblockN{\textbf{Nikita~Polyanskii}\IEEEauthorrefmark{1}, and \textbf{Ilya~Vorobyev}\IEEEauthorrefmark{1}\IEEEauthorrefmark{2} }
		\IEEEauthorblockA{\IEEEauthorrefmark{1}Center for Computational and Data-Intensive Science and Engineering, \\
			Skolkovo Institute of Science and Technology\\
			Moscow, Russia 127051}
		\IEEEauthorblockA{\IEEEauthorrefmark{2}Advanced Combinatorics and Complex Networks Lab, \\
			Moscow Institute of Physics and Technology\\  Dolgoprudny, Russia 141701}
		\IEEEauthorblockA{\textbf{Emails}: nikita.polyansky@gmail.com, vorobyev.i.v@yandex.ru}
	}
	
	\maketitle
	
	\begin{abstract}
		A primitive $k$-batch code encodes a string $x$ of length $n$ into string $y$ of length $N$, such that each multiset of $k$ symbols from $x$ has $k$ mutually disjoint recovering sets from $y$.  We develop new explicit and random coding constructions of linear primitive batch codes based on finite geometry. In some parameter regimes, our proposed codes have lower redundancy than previously known batch codes. 
	\end{abstract}
 
	\begin{IEEEkeywords}
		Private information retrieval, finite geometry, primitive batch codes
	\end{IEEEkeywords}

	\section{Introduction}
	Batch codes were originally proposed by Ishai et al.~\cite{ishai2004batch} for
	load balancing in distributed systems, and amortizing the computational cost of private information retrieval and related cryptographic protocols. Ishai et al. gave a definition of \textit{batch codes} in a general form, namely $n$ information symbols $x_1,\ldots,x_n$ are encoded to an $m$-tuple of strings $y_1,\ldots, y_m$ (referred to as \textit{buckets}) of total length $N$, such that for each $k$-tuple (\textit{batch}) of distinct indices $i_1,\ldots, i_k\in[n]$, the entries $x_{i_1},\ldots, x_{i_k}$ can be decoded by reading at most $t$ symbols from each bucket. The parameter $k$ is usually called \textit{availability} and it plays an important role in supporting high throughput of the distributed storage system. If a batch could contain any \textit{multiset} of indices (not only distinct indices), then we use the term a \textit{multiset batch code}. In a special case when $t=1$ and each bucket contains one symbol, a multiset  batch code is called \textit{primitive}. This class of batch codes is the most studied one in the literature since there are several statements~\cite{ishai2004batch} which allow to trade between different choices of $n$, $N$, $m$, $t$ and $k$. In other words, better constructions of primitive batch codes would imply better constructions of multiset batch codes. 
	\subsection{Notation}
	We denote the field of size $2$ by $\F_2$.  The symbol $[n]$ stands for the set of integers
	$\{1, 2,\ldots, n\}$. Let us give a formal definition of codes studied in this
	paper.
	\begin{definition}\label{definition batch code}
		Let $C$ be a linear code of length $N$ and dimension $n$ over the field $\F_2$, which encodes a string $x_1,\ldots,x_n$ to $y_1,\ldots, y_N$.
		The code $C$ will be called a \textit{primitive linear $k$-batch code} (simply, \textit{$k$-batch code}), and will be denoted	by $[N, n, k]^B$, if for every multiset of symbols $\{x_{i_1},\ldots, x_{i_k} \},$ ${i_j}\in	[n]$, there exist $k$ mutually disjoint sets $R_{i_1},\ldots , R_{i_k}\subset[N]$ (referred to as \textit{recovering sets}) such that for all $j \in [k]$, $x_{i_j}$ is a sum of the symbols $y_p$ with indices $p$ from $R_{i_j}$.
	\end{definition}
	Given $n$ and $k$, we denote the minimal integer $N$ such that an $[N,n,k]^B$ code exists by $N_B(n,k)$. In this paper we focus on the minimal redundancy of batch codes, which we abbreviate by $r_B(n,k):=N_B(n,k)-n$. 
	
	Recall that a \textit{systematic linear code} is a linear code in which the input data is embedded in the encoded output, i.e., $y_i=x_i$ for $i\in[n]$. In what follows we are going to construct systematic linear batch codes.
	The following special case of recovering sets will be particularly useful.
	\begin{definition}\label{simple recovering set}
		For a systematic linear code, we say that the recovering set $R$ for information symbol $x_i$ is \textit{simple} if $R$ contains exactly one index greater than $n$. In other words, if $j$ is such an index, then 
		$$
		y_j = x_i + \sum_{t\in R\setminus \{j\}}x_t.
		$$
	\end{definition}
	Note that many constructions, suggested earlier and in this paper, possess a more stronger property than one described in Definition~\ref{definition batch code} -- the existence of mutually disjoint simple recovering sets. 
	
	We use the notation $n^{\epsilon^-}$ in a statement to demonstrate that the statement remains true for all $n^{\epsilon-c}$, where $c$ is any fixed positive number. In the rest of the paper we will mainly concentrate on the case when $k=n^\epsilon$, $n\to\infty$.
	\subsection{Related Work}
	The
	authors of~\cite{ishai2004batch} provided constructions of various families	of batch codes. Those constructions were based on unbalanced expanders, on recursive application of
	trivial batch codes, on smooth and Reed-Muller
	codes, and others. Many other constructions proposed later in~\cite{asi2018nearly, rawat2016batch, vardy2016constructions} improve the redundancy of batch codes. In particular, a systematic linear code, defined by the generator matrix $G=[I_n|E]$, is shown~\cite{rawat2016batch} to be a $k$-batch code, where $k$ is the minimal number of ones in rows of $E$ and the bipartite graph, whose biadjacency matrix is $E$, has no cycle of length at most $6$. Constructions based on array codes and multiplicity codes were investigated in~\cite{asi2018nearly}.
	
	There is another class of related codes which is called \textit{combinatorial batch codes}. For these codes the same property as for the batch codes is required, but symbols cannot be encoded. Such codes were investigated in~\cite{bhattacharya2012combinatorial, brualdi2010combinatorial,silberstein2016optimal, stinson2009combinatorial,bujtas2011combinatorial}.	A special case of batch codes, called \textit{switch codes}, was studied in~\cite{wang2013codes,buzaglo2018consecutive,chee2015combinatorial,wang2015optimal}. It was suggested in~\cite{wang2013codes}
	to use such codes to increase the parallelism of data routing in the network switches. 
	\textit{Private information retrieval (PIR) codes}
	can be seen as an instance of batch codes, namely we require a weaker property that every information symbol has $k$ mutually independent recovering sets. PIR codes were suggested in~\cite{fazeli2015pir} to decrease storage overhead in PIR schemes preserving both privacy and communication complexity. Some constructions and bounds for PIR codes can be found in~\cite{blackburn2017pir,rao2016lower,asi2018nearly,fazeli2015pir,zhang2016private}. \textit{One-step majority-logic decodable codes}~\cite{lin2001error} require a stronger property than PIR codes, namely every encoded symbol should have $k$ mutually independent recovering sets.  Also we refer the reader to \textit{locally repairable codes with availability}~\cite{wang2015achieving,pamies2013locally,rawat2016locality}, which have an additional (with respect to PIR codes) constraint on the size of recovering sets.
	
	Recall some known results on the minimal redundancy of batch codes:
	\begin{enumerate}
		\item $r_B(n,k)\ge k-1$;
		\item $r_B(n,k)=\Omega(\sqrt{n})$ for $k\ge 3$, \cite{wootters2016, rao2016lower};
		\item $r_B(n,k)=O(k^2 \sqrt{n} \log n)$ for $k\le \sqrt{n}/\log n$, \cite{asi2018nearly};
		\item $r_B(n,n^\epsilon)\le n^{7/8}$ for $7/32<\epsilon\le1/4$, \cite{rawat2016batch};
		\item $r_B(n,n^\epsilon)\le n^{4\epsilon}$ for $1/5<\epsilon\le7/32$, \cite{rawat2016batch};
		\item $r_B(n,n^{\epsilon^-}) \le n^{5/6 + \epsilon/3}$ for $0 < \epsilon \le 1/2$, \cite{asi2018nearly};
		\item $r_B(n,n^{1-\epsilon})\le n^{1-\delta}$ for $0\le\epsilon\le 1$, where ${\delta = \delta(\epsilon) > 0}$,~\cite{asi2018nearly}.
	\end{enumerate}
	In particular, it follows that the best known lower bound on the redundancy of batch codes is as follows
	\begin{equation}\label{lower bound on r}
	r_B(n,k)\ge \Omega(\max(\sqrt{n}, k)).
	\end{equation}
	\subsection{Our contribution}
	In this paper we develop new explicit and random coding constructions of linear primitive batch codes based on finite geometry. In Table~\ref{summary}  our contribution (upper bounds on $r_B(N,k)$) is summarized.	
	\begin{table}[h]
		\caption{Binary Batch Codes Summary}
		\label{summary}
		\begin{center}
			\begin{tabular}{|c|c|c|}
				\hline 
				Construction & Availability $k$ & Redundancy $r_B(n,k)$ \\ 
				\hline 
				Theorem~\ref{random construction} (random)  & $k=o(n^{1/3}/\log n)$ &  $O(k^{3/2}\sqrt{n}\log n)$\\ 
				\hline 
				\makecell{Theorem~\ref{construction in dimensional space} (explicit) \\ for any $\ell \in \N$} & $k<\frac{1}{\ell^2}n^{1/{(2\ell+1)}}$ & $O\left(kn^{\frac{\ell+1}{2\ell+1}}\right)$\\
				\hline
			\end{tabular}
		\end{center}
	\end{table} 
	
	Let us denote $r_B(n,k=n^\epsilon)=:O(n^{\delta})$. The lower bound given by~\eqref{lower bound on r} along with old and new upper bounds on $\delta = \delta(\epsilon)$ are plotted in Figure~\ref{asymptotic results}. The existence result of our work shows that the known upper bound on $\delta(\epsilon)$ can be improved for $\epsilon\in(0,2/7)\setminus\{1/5,1/4\}$.  Furthermore, we emphasize that the endpoints of novel explicit constructions by Theorem~\ref{construction in dimensional space} lye on the segment given by the random construction in Theorem~\ref{random construction}. 
	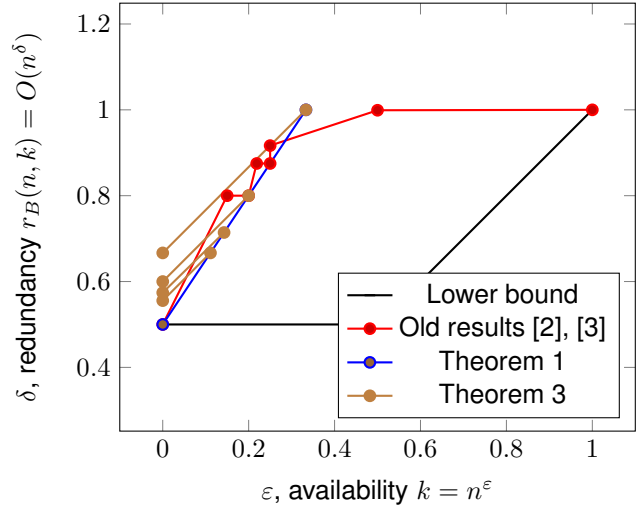
\begin{figure}
		\begin{center}
			\begin{tikzpicture}[y=7cm, x=7cm,font=\sffamily]
			\begin{axis}[axis equal, 
			xtick={0,0.2,...,1.2},
			ytick={0.2,0.4,...,1.2},
			legend pos=south east,
			xlabel={$\epsilon$, availability $k=n^\epsilon$}, ylabel={$\delta$, redundancy $r_B(n,k)=O(n^\delta)$}]			
			\addplot+[mark=-, color = black, thick] plot coordinates
			{(0,	0.5)	(0.5, 0.5) (1, 1)};
			\addlegendentry{Lower bound};
			\addplot+[mark=*, color = red, thick] plot coordinates
			{(0,	0.5) (0.15,	0.8) (0.2,	0.8) (0.21875,	0.875)	(0.25,	0.875)		(0.25,	0.91666666)	(0.5,	0.999)	(1,	0.99999999) };
			\addlegendentry{Old results \cite{asi2018nearly,rawat2016batch}}
			\addplot+[mark=*, color = blue, thick] plot coordinates
			{ (0,	0.5)	(0.3333333, 1) };
			\addlegendentry{Theorem \ref{random construction}}
			\addplot[mark=*, color = brown, thick] plot coordinates
			{(0,	0.6666666)	(0.3333333, 1)};
			\addlegendentry{Theorem \ref{construction in dimensional space}}
			\addplot[mark=*, color = brown, thick] plot coordinates
			{(0,	0.6)	(0.2, 0.8)};
			\addplot[mark=*, color = brown, thick] plot coordinates
			{(0,	0.57412)	(0.1428, 0.7142)};
			\addplot[mark=*, color = brown, thick] plot coordinates
			{(0,	0.55555)	(0.11111, 0.6666666)};
			\end{axis}				
			\end{tikzpicture}  
			\caption{Asymptotic results for binary primitive batch codes}
			\label{asymptotic results}
		\end{center}
	\end{figure}
	\subsection{Outline}
	The remainder of the paper is organized as follows. In Section~\ref{random construction section} we prove the existence of batch codes using the probabilistic method. The achieved upper bound on the redundancy improves previously known results when $k=n^\epsilon$ and $\epsilon\in(0,2/7)\setminus\{1/5, 1/4\}$. We note that for $k=n^{1/4}$ and $k=n^{1/5}$, the redundancy of our construction is worse by the multiplicative factor $\log n$ than one in~\cite{rawat2016batch}.   In Section~\ref{explicit construction section} we describe our main results and give new  explicit constructions of batch codes. In a more detail, we associate information bits with elements of vector space $\F_q^{2\ell+1}$, $\ell\in\N$, and define parity-check bits as sums of information bits lying in some affine $\ell$-dimensional subspaces. 
	Finally, Section~\ref{conclusion} concludes the paper. 
	\section{Random Construction of Batch Codes}\label{random construction section}	
	To prove the following statement, we consider a systematic linear code defined by the generator matrix $G=[I_n|E]$, where $E$ is taken as an incidence matrix of randomly chosen family of subsets of lines in the affine plane. 
	\begin{theorem}\label{random construction}
		For $k=o(n^{1/3}/\log n)$, the redundancy of $k$-batch codes is
		$$
		r_B(n,k)= O(k^{3/2}\sqrt{n}\log n).
		$$
	\end{theorem}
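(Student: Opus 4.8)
The plan is to realize the parity part $E$ as a point--line incidence structure of the affine plane $AG(2,q)$ with $n=q^2$, identifying the information symbols $x_1,\dots,x_n$ with the points (for general $n$ one rounds $q$ up to the nearest prime power, affecting only lower-order terms). Each parity bit is the sum of the information symbols in a uniformly random $s$-subset $S$ of a line, where I take $s\asymp\sqrt{n/k}$ and include each line independently with probability $p$ chosen so that every point lies in $\Theta(k\log n)$ of the selected subsets. Retaining at most one subset per line, the bipartite point--subset graph has no $4$-cycles, since two distinct lines meet in at most one point; hence every chosen subset yields a \emph{simple} recovering set $\{y_S\}\cup(S\setminus\{x_i\})$ for each of its points in the sense of Definition~\ref{simple recovering set}. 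The redundancy equals the number of retained subsets, which is $\Theta(p\,n)=\Theta\!\left((k\log n/s)\,n\right)=\Theta(k^{3/2}\sqrt{n}\,\log n)$, matching the claimed bound.

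First I would establish the degree guarantee: by a Chernoff bound and a union bound over the $n$ points, every point lies in at least $k$ selected subsets with high probability. This already settles \emph{concentrated} requests, i.e. multisets in which one symbol $x_i$ occurs with large multiplicity, because the $\ge k$ subsets through a fixed point sit on distinct lines through that point and are therefore pairwise disjoint apart from the point itself, so the associated simple recovering sets are automatically mutually disjoint.

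The core is to serve an \emph{arbitrary} multiset of $k$ requests by globally disjoint recovering sets. I would recover one copy of each distinct requested point through its systematic symbol and the remaining $\le k-1$ copies through subsets. Two collisions can arise: (a) a chosen subset contains another requested point, and (b) two chosen subsets share a point. Because distinct lines meet in at most one point, any two candidate recovering sets overlap in at most one systematic symbol, and a fixed point lies in at most one candidate subset of any given target; a first-moment computation then shows that a uniformly random assignment produces only $O(k)$ collisions of type (b), while the expected number of type-(a) collisions is $O(k^{3/2}\log n/\sqrt{n})$, which is $o(1)$ \emph{exactly} in the claimed range. Since each target still retains $\Theta(k\log n)$ candidate subsets, I would clear the $O(k)$ type-(b) collisions by re-routing the offending copies to fresh candidates, formalised through the Lov\'asz Local Lemma or through an alteration/defect-Hall argument on the collision graph. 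To make the conclusion hold simultaneously for every multiset with a single random code, I would phrase the required sparsity as a property quantified over all points and all sets of at most $k$ forbidden points and verify it by a union bound; this union bound contributes the $\log n$ factor and constrains the admissible range of $k$.

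The hard part, and the place where the hypothesis $k=o(n^{1/3}/\log n)$ is forced, is precisely this last step. The criterion of \cite{rawat2016batch} (no cycle of length at most $6$) is unavailable: the affine plane carries $\Theta(n^{3})$ triangles ($6$-cycles), and at the densities needed for point-degree $\Theta(k\log n)$ each parity symbol lies in $\Omega(n)$ of them, so they cannot be removed by deletion without destroying the degree guarantee; one must argue the batch property directly rather than through girth. Controlling the type-(a) collisions over all adversarial requests requires the expected number of ``dirty'' candidates per point to stay below the available slack, which holds exactly when $k^{3/2}\log n=o(\sqrt{n})$, i.e. $k=o(n^{1/3}/\log n)$; the same inequality keeps the redundancy $k^{3/2}\sqrt{n}\log n$ below $n$, so the code is nontrivial. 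I expect the bookkeeping of the re-routing under the union bound to be the most delicate part, whereas the geometric input --- two lines meet in at most one point --- is exactly what buys the factor $\sqrt{k}$ improvement over the $O(k^{2}\sqrt{n}\log n)$ bound of \cite{asi2018nearly}.
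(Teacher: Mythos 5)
Your construction is exactly the paper's: sample each line of the affine plane $AG(2,q)$, $n=q^2$, with probability $p_1\asymp k^{3/2}\log n/\sqrt{n}$ and thin each selected line pointwise with probability $p_2\asymp 1/\sqrt{k}$, so the redundancy count, the degree guarantee, and the treatment of concentrated requests all match. The gap is in how you certify disjoint recovering sets for an arbitrary batch. First, the obstruction to disjointness is not the set of $k$ requested points but the set $I_2$ of \emph{all information symbols already consumed by the other recovering sets}, which has size $\Theta(k\cdot qp_2)=\Theta(\sqrt{nk})$; this is far too large for your ``union bound over all sets of at most $k$ forbidden points'', and since each point of the plane lies in at most one candidate subset of a given target, a worst-case count lets $I_2$ kill all $\Theta(k\log n)$ candidates of that target. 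Second, the re-routing/LLL step does not close with these parameters: a fixed candidate of $x_i$ has $\Theta(\sqrt{n/k})$ points and therefore meets $\Theta(qp_1p_2^2)=\Theta(\sqrt{k}\log n)$ of the $\Theta(k\log n)$ candidates of another requested $x_j$ in expectation, so the pairwise collision probability under a uniformly random assignment is $\Theta(1/\sqrt{k})$ while the dependency degree is $\Theta(k)$; the symmetric local lemma condition fails by a factor of $\Theta(\sqrt{k})$, and ``re-routing the $O(k)$ offending copies to fresh candidates'' presupposes exactly the clean-candidate guarantee you are trying to establish. Your $O(k)$ collision count is moreover an expectation over the code's randomness for one fixed request; to survive the union bound over $n^{\Theta(k)}$ requests you need per-request failure probability $n^{-\Omega(k)}$, i.e.\ exponential concentration, not a first moment.

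The idea your plan is missing, and which the paper supplies, is a sequential (greedy) analysis combined with a partition argument: the $q+1$ lines through the current target $x_i$ partition the remaining points of the plane, so by double counting at most $q/2$ of them can each contain at least $4p_2k$ symbols of $I_2$; conditioned on any admissible history, each of the remaining $\ge q/3$ lines independently yields a selected subset that contains $x_i$ and avoids $I_2$ with probability at least $p_1p_2(1-p_2)^{4p_2k}\ge p_1p_2/2$ (via stochastic domination), so the probability that fewer than $k$ clean candidates remain is $\exp(-\Omega(qp_1p_2))=n^{-\Omega(k)}$, which beats the union bound over requests and histories and forces the constraint $k=o(n^{1/3}/\log n)$. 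You would need to supply this step, or an equivalent with uniform control of the conflict degrees (e.g.\ a Haxell-type independent-transversal argument), to turn your outline into a proof.
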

	\begin{proof}
		For simplicity of notation and without loss of generality, we assume that $n=q^2$, $q$ is a prime power integer and $k<q/12$. Consider a finite affine plane $(P,L)$ of order $q$, where $P$, $|P|=n$, is a set of points, and $L$, $|L|=n+q$, is a set of lines. Each line is known to contain $q$ points, and each point is in $q+1$ lines, any two lines in the affine plane cross each other in at most $1$ point.
		
		Let us randomly choose a family $F:=\{S_1,\ldots, S_M\}$ of subsets of lines in the affine space. First, we take each line in the affine space independently with probability $p_1$, which will be specified later. Second, we define a subset of any included line by leaving each point on the line independently with probability $p_2$, which will be specified later. It can be seen that for a proper choice $p_1$, the cardinality of $F$, $|F|=M$ (total number of subsets), is ``close'' to its average $p_1(n+q)$ with high probability, and for a proper choice of $p_2$, the cardinality of any subset $S_i$ is ``close'' to its average $p_2 q$. We define event $W_1$ when the total number of lines $M>3p_1n$, and $W_2$ if there exists some $S_i$ of size $>2p_2 q$. Moreover, we define $W_{2,j}$, $j\in[n]$, if there exists $S_i$ of size $>2p_2 q$ such that the line corresponding to subset $S_i$ does not contain the $j$th point.
		
		Now we consider some bijection between $n$ information symbols and $n$ points. Therefore, the information symbols are associated with the points in the plane.  Given a subset $S_i$, we can define a parity-check symbol $y_i$ as a sum of information symbols corresponding to points in $S_i$.		
		Let us consider a systematic linear code $C$ of length $n+M$ and dimension $n$ defined as a map $\phi:\F_2^n \to \F_2^{n+M}$:
		$$
		\phi(x_1,\ldots, x_n):=(x_1,\ldots, x_n,y_1,\ldots, y_M).
		$$
		Given a multiset of information symbols of size $k$, we can  uniquely represent it in the form 
		$$((x_{i_1},k_1),\ldots, (x_{i_\ell},k_\ell)),
		$$
		where 
		$$
		1\le i_1<\ldots < i_\ell \le n \text{ and }\sum_{i=1}^{\ell}k_i = k.
		$$
		We define a greedy algorithm for constructing a collection of recovering sets for any given multiset of information bits of size at most $k$. Assume that the algorithm can construct simple recovering sets for the multiset 
		$$
		((x_{i_1},k_1),\ldots,(x_{i_{j-1}},k_{j-1})), \quad j-1<\ell,
		$$
		representing the first $j-1$ groups of the multiset
		$$((x_{i_1},k_1),\ldots, (x_{i_\ell},k_\ell)).
		$$
		Then find first $k_{j}$  parity-check symbols depending on symbol $x_{i_j}$, such that the corresponding $k_{j}$ simple recovering sets are  disjoint with already chosen recovering sets, and $k_{j}$ lines corresponding to the parity-check symbols does not go through any point in the set 
		$$
		\{x_{i_1},\ldots, x_{i_{j-1}},x_{i_{j+1}},\ldots, x_{i_{\ell}}\}.
		$$ Let us add these $k_{j}$ recovering sets to the collection of recovering sets. We note that added $k_{j}$ simple recovering sets are mutually disjoint by our construction.
		
		To show that the code $C$ is likely to be a $k$-batch code, we are going to estimate the probability of event $B$ that the greedy algorithm fails for some multiset of information symbols. To get an estimate of this event, we introduce auxiliary terminology.		 We say that the information symbol $x_i$  is \textit{$s$-bad}, $0\le s<k$, if there exists some multiset 
		\begin{align*}
		((x_{i_1},k_1),\ldots,(x_{i_\ell},k_\ell)) \text{ with } i=i_{j},\,i_1<\ldots<i_\ell&,\\ \sum\limits_{f\in[\ell]\setminus\{j\}}k_f=s,\, s+k_{j}=k&,
		\end{align*}
		so that the algorithm finds recovering sets for the first $(j-1)$ groups of the multiset and fails to find $k_{j}$ recovering sets for $x_i=x_{i_{j}}$. 
		Let $B_{i,s}$ be an event that information symbol $x_i$ is $s$-bad. If no event among $B_{i,s}$ occurs, then the event $B$ doesn't happen.
		
		We note that $k$-batch code with redundancy at most $3p_1n$ exists if $\Pr(B\cup W_1)<1$. 
		Now we estimate this event as follows
		\begin{multline}\label{big bad event}
		\Pr(B\cup W_1)\le \Pr(W_1)+\Pr\left(\bigcup\limits_{\substack{i\in[n]\\ s\in\{0,\ldots,k-1\}}}B_{i,s}\right) \\\leq \pr{W_1} + \pr{W_2} + kn\max\limits_{\substack{i\in[n]\\ s\in\{0,\ldots,k-1\}}}\pr{B_{i,s}\cap\overline{W}_{2}}. 
		\end{multline}
		
		It is easy to estimate $\pr{W_1}$ and $\pr{W_2}$ applying the Chernoff bound in the form
		$$
		\Pr(X\ge(1+\delta)\mu)\le e^{-\frac{\delta^2 \mu}{3}},
		$$
		where $0<\delta < 1$, and $X$ is a sum of independent random variables taking values in $\{0,1\}$ with $\expect X = \mu$. We have
		\begin{multline}\label{estimate W1}
		\pr{W_1}=\pr{M> 3p_1n}\\
		\le\pr{M> 2p_1(n+q)}\le e^{-\frac{p_1n}{3}} 
		\end{multline}
		and
		\begin{multline}\label{estimate W2}
		\pr{W_2}=\pr{\text{``there exists $S_i$ of size $>2p_2 q$''}}\\
		\le 2ne^{-\frac{p_2 q}{3}}.
		\end{multline}
		Now we estimate the third probability in~\eqref{big bad event} as follows
		\begin{multline}\label{union bound bad}
		\pr{B_{i, s}\cap\overline{W}_{2}}\le \pr{B_{i, s}\cap\overline{W}_{2,i}}\\\le n^{k-1}\pr{A\cap C \cap \overline{W}_{2,i}}	\le
		n^{k-1}\cpr{A}{\overline{W}_{2,i} \cap C},
		\end{multline}
		where $C$ stands for the event that the algorithm  finds recovering sets 
		$$
		R_{i_1,1}, \ldots, R_{i_1,k_{i_1}},R_{i_2,1}\ldots, R_{i_{j-1}, k_{i_{j-1}}}
		$$
		for the first $j-1$ groups of 
		$$
		((x_{i_1},k_1),\ldots,(x_{i_{\ell}},k_{j_\ell})),
		$$
		and $A$ denotes the event that the algorithm fails to find $k-s$ recovering sets for $x_i$, which are disjoint with all recovering sets the algorithm found. Let $I_1;=\{i_1,\ldots, i_\ell\}$, and 
		$I_2$ be a set of information symbols included to recovering sets   
		$$
		R_{i_1,1}, \ldots, R_{i_1,k_{i_1}},R_{i_2,1}\ldots, R_{i_{j-1}, k_{i_{j-1}}}. 
		$$
		The cardinality of $I_2$ given the event $\overline{W}_{2,i}$ (consequently, given the event $\overline{W}_{2,i}\cap C$) is upper bounded as follows
		\begin{equation}\label{second info set}
		|I_2|=\sum_{u=1}^{j-1}\sum_{v=1}^{k_u}(|R_{i_{u},v}|- 1)\le 2 q p_2k,
		\end{equation}
		since $\overline{W}_{2,i}$ stands for the event that all the subsets corresponding to the lines disjoint with $x_i$ are of size at most $2p_2 q$. 
		The total number of lines containing $x_i$ is equal to $q+1$. One can easily see that there are at most $k$ of them which have a nonzero intersection with $I_1$. Since all the lines containing fixed point $x_i$ share only $x_i$, we claim that there are at most $q/2$ lines which intersect $I_2$ by at least $4p_2k$ points. Indeed, otherwise we can lower bound the cardinality of $I_2$ by $\ge 4p_2k(q /2 + 1)$ which contradicts with \eqref{second info set}. 
		We shall try to recover symbol $x_i$ with the help of other $t$, $t\ge q/2 - k\ge q/3$, lines. Enumerate them from $1$ to $t$. Let $\xi_1, \ldots, \xi_{t}$ be indicator random variables, which equals 1 iff 
		\begin{enumerate}
			\item the corresponding line was randomly taken (with probability $p_1$),
			\item the symbol $x_i$ was left (with probability $p_2$) and included to the parity-check sum,
			\item none of the symbols from $I_2$ were added in the corresponding parity-check.
		\end{enumerate}		
		
		Define the random variable 
		$$
		\eta:=\sum\limits_{i=1}^{q/3}\xi_i.
		$$
		Since $\xi_i$ is an independent Bernoulli random variable with probability $p'_i\geq p_1p_2(1-p_2)^{4p_2j}$, we claim that Binomial random variable $\chi$ with parameters $q/3$ and $p_1p_2(1-p_2)^{4p_2j}$ is stochastically dominated by $\eta$. Now we proceed with upper bounding~\eqref{union bound bad}  as follows
		\begin{multline*}
		\cpr{A}{\overline{W}_{2,i}\cap C}\le \pr{\chi < k-s} \\\le  \binom{q/3}{k}\left(1-p_1p_2(1-p_2)^{4p_2k}\right)^{q/3-k} \\\le q^k\left(1-p_1p_2(1-p_2)^{4p_2k}\right)^{q/4}.		
		\end{multline*}
		Combining the last inequality together with~\eqref{big bad event}-\eqref{union bound bad} yields
		\begin{multline}\label{final estimate}
		\Pr(B\cup W_1)\\\le 2ne^{-\frac{p_2 q}{3}}+e^{-\frac{p_1n}{3}}+kn^k q^k(1-p_1p_2(1-p_2)^{4p_2k})^{q/4}.
		\end{multline}
		Given $\epsilon>0$, there exists sufficiently large $q_0$ such that for $q>q_0$ the first two terms are at most $\epsilon$. Now we proceed with the last term
		$$
		kn^k q^k(1-p_1p_2(1-p_2)^{4p_2k})^{q/4}\le kn^{1.5k} e^{-p_1p_2(1-p_2)^{4p_2k}q/4}.
		$$
		Taking $p_2:=1/\sqrt{8k}$, we have $4p_2k\ge 1$ and 
		$$
		(1-p_2)^{4p_2k}\ge 1-  4p_2^2k=1/2.
		$$
		From this it follows that for
		$$
		p_1 :=36\frac{k^{3/2}\log n}{\sqrt{n}}
		$$
		and sufficiently large $n$, $n=q^2$, the last term in~\eqref{final estimate} is at most $\epsilon$. Therefore, we obtain that there exists a $k$-batch code with redundancy $M<108k^{3/2}\sqrt{n}\log n$ with probability at least $1-3\epsilon$.  This completes the proof.
	\end{proof}
	\section{Explicit Construction of Batch Codes}\label{explicit construction section}
	In this section to construct batch codes we associate information bits with elements of vector space $\F_q^{2\ell+1}$, $\ell\in\N$, and define parity-check bits as sums of information bits lying in some affine $\ell$-dimensional subspaces. In particular, the following finite geometry framework turns out to be useful.
	
	\begin{definition}\label{nice def}
		Suppose $\{V_1,\ldots, V_m\}$ is a collection of $\ell$-dimensional subspaces in $\F_q^{2\ell+1}$. This collection is said to be \textit{$L$-nice} if  the two properties hold:
		\begin{enumerate}
			\item any two distinct subspaces from this collection have the trivial intersection in the origin only, i.e. $|V_i\cap V_j|=1$ for $i\neq j$; 
			\item for all $i\in[m]$ and for all $v\in\F_q^{2\ell+1}$, $v\not\in V_i$, the affine subspace $v+V_i$ intersects at most $L$ subspaces from this collection.
		\end{enumerate}
	\end{definition}
	Such a framework appears to be new in the literature up to our best knowledge. In the following statement we show how to use a nice collection of subspaces to construct batch codes.
	\begin{lemma}\label{connection}
		Suppose $\{V_1,\ldots, V_m\}$ is an $L$-nice collection of $\ell$-dimensional subspaces in $\F_q^{2\ell+1}$.   Then there exists a $[q^{2\ell+1}+ m q^{\ell+1},q^{2\ell+1}, \lfloor m / L \rfloor ]^B$ code.
	\end{lemma}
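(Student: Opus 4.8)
The plan is to realize the code explicitly from the nice collection and then serve any batch by a greedy assignment of recovering sets. First I would identify the $n=q^{2\ell+1}$ information coordinates with the points of $\F_q^{2\ell+1}$. For each $i\in[m]$ and each of the $q^{\ell+1}$ cosets $W$ of $V_i$, I introduce one parity coordinate carrying $\sum_{w\in W}x_w$; this produces exactly $mq^{\ell+1}$ parity bits, so the length is $q^{2\ell+1}+mq^{\ell+1}$ and the dimension is $q^{2\ell+1}$, as claimed. For a target point $v$ and a direction $V_i$, the coset $W=v+V_i$ yields the \emph{simple} recovering set $R_{v,i}:=\{p_{i,W}\}\cup\bigl((v+V_i)\setminus\{v\}\bigr)$, where $p_{i,W}$ is the associated parity coordinate, since $x_v=y_{p_{i,W}}+\sum_{w\in(v+V_i)\setminus\{v\}}x_w$ over $\F_2$. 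Thus every symbol carries $m$ candidate recovering sets, one per direction.

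Next I would record a disjointness fact coming from property~1: any nonzero vector lies in at most one $V_i$ (otherwise it would witness a nontrivial common point), so the cosets $v+V_i$, $i\in[m]$, pairwise meet only at $v$; as $v\notin R_{v,i}$, the $m$ candidate sets of a single symbol are pairwise disjoint. I then process the $k=\lfloor m/L\rfloor$ requested symbols (with multiplicity) one at a time, assigning to each a direction whose recovering set avoids all previously used coordinates. The decisive quantity is how many directions a single already-chosen set $R_{v',i'}$, supported on the coset $C'=v'+V_{i'}$, can forbid for a new target $v$.

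Finally I would bound this blocking number by $L$. If $v\in C'$, the only forbidden direction is $i'$ itself: for $i\neq i'$ the cosets $v+V_i$ and $C'$ meet solely at $v\notin R_{v,i}$, while $i=i'$ reproduces $C'$. If $v\notin C'$, a direction $V_i$ can be forbidden only when $(v+V_i)$ meets $C'$, i.e. when $v-v'\in V_i+V_{i'}$; setting the displacement $v''=v-v'\notin V_{i'}$ and using that ``$v''+V_{i'}$ meets $V_i$'' is equivalent to $v''\in V_i+V_{i'}$, property~2 bounds the count of such $i$ by $L$, and no parity clash is possible since that would force $v\in C'$. Hence each chosen set forbids at most $L$ directions, so before fixing the $t$-th symbol at most $(t-1)L\le(k-1)L<m$ directions are blocked, leaving a free one; the greedy therefore succeeds on every size-$k$ multiset, so the code is a $\lfloor m/L\rfloor$-batch code. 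The hard part will be exactly this blocking estimate: recasting ``coset $v+V_i$ meets coset $v'+V_{i'}$'' into the hypothesis of $L$-niceness (via the quotient by $V_{i'}$) and cleanly handling the degenerate case $v\in C'$, where the target coordinate $v$ may legitimately sit inside another recovering set without creating a conflict.
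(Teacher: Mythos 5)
Your proposal is correct and follows essentially the same route as the paper: the same coset-based parity construction, the same observation that property~1 gives each symbol $m$ pairwise disjoint simple recovering sets, and the same greedy argument in which property~2 bounds by $L$ the number of directions blocked per already-chosen set, so $(k-1)L<m$ suffices for $k=\lfloor m/L\rfloor$. The only difference is that you spell out the blocking estimate (the reduction of ``coset meets coset'' to the hypothesis of $L$-niceness and the degenerate case $v\in C'$) which the paper asserts in a single sentence.
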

	We postpone the proof of Lemma~\ref{connection} to Appendix.
	Now we give a construction of nice subspaces, which represents a collection of Reed-Solomon codes of length $2\ell+1$ and dimension $\ell$.
	\begin{construction}\label{construction}
		Let $V$ stand for a $(2\ell+1)$-dimensional $\F_q$-vector space, and  $B$ 
		is an $\F_q$-basis for $V$.  Now  let us define a collection $\cC$ of subspaces of size $m:=\lfloor q/\ell\rfloor$.  Let the $i$th, $0\le i< m$, subspace $V_i\in \cC$ be the linear span of $\ell$ vectors $\{{v}^i_1,\ldots, {v}^i_\ell\}$, where vector ${v}^i_j$, $j\in\{0,\ldots,\ell-1\}$, is written in basis $B$ as follows
		$$
		{v}^i_j:= (1, \alpha^{\ell i+j},\alpha^{2(\ell i+j)},\ldots,  \alpha^{{2\ell(\ell i + j)}}).
		$$
	\end{construction}
	We prove that $\cC$ is $\ell$-nice in Proposition~\ref{lower bound on card}. Let $m(L,\ell,q)$ be the maximal number $m$ such that  there exists an $L$-nice collection of $\ell$-dimensional subspace in $\F_q^{2\ell + 1}$ of cardinality $m$.  The next two propositions establish a quite tight estimate on the maximal cardinality of a nice collection of subspaces.
	\begin{proposition}\label{lower bound on card}
		Construction~\ref{construction} is $\ell$-nice. This implies, in particular, for any  $\ell,\,L\in\N$, $L\ge \ell$, and prime power integer $q$, the lower bounds on $m(L,\ell,q)$ holds
		$$
		m(L,\ell,q) \ge \lfloor q / \ell \rfloor.
		$$
	\end{proposition}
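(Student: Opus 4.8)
The plan is to translate the two defining properties of an $\ell$-nice collection into statements about a generalized Reed--Solomon code, exploiting that each generator $v^i_j=(1,\alpha^{s},\dots,\alpha^{2\ell s})$ with $s=\ell i+j$ is the evaluation vector of the ``moment curve'' at the point $\alpha^{s}\in\F_q$. Throughout I will use the pairing $\langle v^i_j,\,\mathrm{coeff}(P)\rangle=P(\alpha^{s})$ between a moment-curve point and the coefficient vector of a polynomial $P$ of degree at most $2\ell$, together with the elementary fact that any $r\le 2\ell+1$ moment-curve points at distinct parameters are $\F_q$-linearly independent (their coordinate matrix is Vandermonde). Here $\alpha$ is a primitive element, so the $\ell m\le q$ exponents $\ell i+j$ produce distinct nonzero evaluation points, and I write $S_i=\{\ell i,\dots,\ell i+\ell-1\}$.

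For property (1) I would observe that $V_i$ and $V_j$ ($i\neq j$) are jointly spanned by the $2\ell$ moment-curve points at the $2\ell$ distinct parameters $S_i\cup S_j$. Since $2\ell\le 2\ell+1$, these points are independent, so $\dim(V_i+V_j)=2\ell$ and hence $\dim(V_i\cap V_j)=\ell+\ell-2\ell=0$, giving $|V_i\cap V_j|=1$.

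For property (2), fix $i$ and a vector $v\notin V_i$. The affine subspace $v+V_i$ meets $V_t$ exactly when $v\in V_i+V_t$, and for $t\neq i$ this is a hyperplane by property (1). I will pass to orthogonal complements: writing $Q_i(T)=\prod_{s\in S_i}(T-\alpha^{s})$ and $R_t(T)=\prod_{j=0}^{\ell-1}(T-\alpha^{\ell t+j})$, the complement $(V_i+V_t)^\perp=V_i^\perp\cap V_t^\perp$ is spanned by the single coefficient vector of $Q_iR_t$ (degree $2\ell$, since $Q_i$ and $R_t$ share no roots), so $v\in V_i+V_t$ iff the functional $\mu(R):=\langle v,\mathrm{coeff}(Q_iR)\rangle$ vanishes at $R=R_t$. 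Because $V_i^\perp=\{\,\mathrm{coeff}(Q_iR):\deg R\le\ell\,\}$, one checks $\mu\equiv0$ iff $v\perp V_i^\perp$ iff $v\in V_i$; as $v\notin V_i$, the functional $\mu$ is nonzero on the $(\ell+1)$-dimensional space of polynomials of degree $\le\ell$, so its kernel is an $\ell$-dimensional hyperplane. Thus the number of subspaces met is at most the number of $R_t$ lying in this hyperplane.

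It then remains to show that $\{R_t\}_{t=0}^{m-1}$ is in general position, i.e.\ that any hyperplane contains at most $\ell$ of them, equivalently that any $\ell+1$ of the $R_t$ are linearly independent. Expanding $R_t(T)=\sum_{k=0}^{\ell}a_k\gamma_t^{\,\ell-k}T^k$ with $\gamma_t=\alpha^{\ell t}$ shows that the coefficient vector of $R_t$ is $\mathrm{diag}(a_0,\dots,a_\ell)$ applied to the Vandermonde vector $(\gamma_t^{\ell},\dots,\gamma_t,1)$, where the $a_k$ are the coefficients of $g(T)=\prod_{j=0}^{\ell-1}(T-\alpha^{j})$. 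Since the $\gamma_t$ are distinct, the corresponding Vandermonde vectors are in general position, and the claim follows provided the diagonal scaling is invertible, i.e.\ provided every $a_k\neq0$. This last point is the main obstacle: the $a_k$ are (up to sign) the elementary symmetric functions of the geometric progression $1,\alpha,\dots,\alpha^{\ell-1}$, and some could a priori vanish. I would establish $a_k\neq0$ from the functional equation $g(\alpha T)(T-\alpha^{\ell-1})=\alpha^{\ell}g(T)(T-\alpha^{-1})$, which yields the recurrence $a_k\,\alpha^{\ell-1}(\alpha^{k}-1)=a_{k-1}\,\alpha^{k-1}(1-\alpha^{\ell-k+1})$; starting from $a_0\neq0$ and using that $\alpha$ has multiplicative order greater than $\ell$ (which holds comfortably in the regime $q\gg\ell$ underlying Theorem~\ref{construction in dimensional space}), every factor $\alpha^{k}-1$ and $1-\alpha^{\ell-k+1}$ is nonzero, so all $a_k\neq0$ by induction. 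Combining the three parts shows $\cC$ is $\ell$-nice; finally, since an $\ell$-nice collection is automatically $L$-nice for every $L\ge\ell$, the bound $m(L,\ell,q)\ge\lfloor q/\ell\rfloor$ follows at once.
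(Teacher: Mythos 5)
Your proof is correct and follows essentially the same route as the paper: a Vandermonde argument for property (1), and for property (2) a reduction (via orthogonal complements of $V_i\oplus V_t$ spanned by the coefficient vectors of $Q_iR_t$) to the linear independence of any $\ell+1$ of the $R_t$, established by factoring their coefficient matrix as a diagonal matrix of the coefficients of $g_\alpha(x)=\prod_{j=0}^{\ell-1}(x-\alpha^j)$ times a Vandermonde matrix. The only genuine difference is that you prove the nonvanishing of the coefficients of $g_\alpha$ by an explicit recurrence, whereas the paper simply invokes the standard fact that a Reed--Solomon generator polynomial has no zero coefficients; your version is more self-contained but otherwise the arguments coincide.
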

	
	\begin{proposition}\cite{wootters2018} \label{upper bound on card}
		For any  $\ell,\,L\in\N$ and prime power integer $q$, the upper bounds on $m(L,\ell,q)$ holds
		$$
		m(L,\ell,q) \le (L+1)q.
		$$
	\end{proposition}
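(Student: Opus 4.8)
The plan is to fix a single subspace of an optimal $L$-nice collection and project the whole ambient space modulo it, so that the two defining properties turn into a clean statement about hyperplanes in a lower-dimensional space, which is then controlled by a double-counting argument. Let $\{V_1,\dots,V_m\}$ be an $L$-nice collection with $m=m(L,\ell,q)$, and consider the quotient map $\pi\colon\F_q^{2\ell+1}\to Q:=\F_q^{2\ell+1}/V_1$, where $Q\cong\F_q^{\ell+1}$ and $\ker\pi=V_1$. The idea is that modding out by one $V_1$ is exactly the move that simultaneously exploits both properties.

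First I would translate the two properties through $\pi$. Since $V_1\cap V_j=\{0\}$ for every $j\neq 1$ (Property~1), the restriction $\pi|_{V_j}$ is injective, so $H_j:=\pi(V_j)$ is an $\ell$-dimensional subspace of the $(\ell+1)$-dimensional space $Q$; that is, each $H_j$ is a linear hyperplane through the origin of $Q$. For Property~2, applied with $i=1$: for $v\notin V_1$ the coset $v+V_1$ meets $V_j$ iff some $w\in V_j$ has $\pi(w)=\pi(v)$, i.e. iff $\pi(v)\in H_j$. Because $\ker\pi=V_1$, the nonzero points of $Q$ are precisely the images $\pi(v)$ of vectors $v\notin V_1$, and whether $H_j$ contains a given point depends only on that point. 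Hence Property~2 says exactly that every nonzero $p\in Q$ lies on at most $L$ of the hyperplanes $H_2,\dots,H_m$. I expect this translation to be the main conceptual step: it is what converts "pairwise trivial intersection" into "the images are full hyperplanes" and the affine-intersection hypothesis into a bounded-multiplicity covering of the punctured quotient.

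Once the dictionary is in place, the counting is routine. I would double-count incidences between $\{H_j\}_{j\ge 2}$ and the nonzero points of $Q$. Each $H_j$ contains $q^\ell-1$ nonzero points, so the number of incidences is $(m-1)(q^\ell-1)$; on the other hand $Q$ has $q^{\ell+1}-1$ nonzero points, each on at most $L$ of the $H_j$, so the number of incidences is at most $L(q^{\ell+1}-1)$. Combining these,
\begin{equation*}
(m-1)(q^\ell-1)\le L\,(q^{\ell+1}-1).
\end{equation*}
Using $\tfrac{q^{\ell+1}-1}{q^\ell-1}=q+\tfrac{q-1}{q^\ell-1}\le q+1$ for $\ell\ge 1$, this yields $m\le L(q+1)+1$, which gives the claimed $m\le(L+1)q$ in the relevant range $L<q$; the remaining arithmetic subtlety is the simplification of $(q^{\ell+1}-1)/(q^\ell-1)$, which is precisely where the clean factor $q+1$ and hence the bound $(L+1)q$ originate (for very large $L$ one can instead invoke the packing bound $m\le\frac{q^{2\ell+1}-1}{q^\ell-1}$ coming from Property~1 alone).
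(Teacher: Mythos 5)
Your proof is, at its core, the paper's proof written in dual language: the quotient map $\pi$ is left-multiplication by the paper's parity-check matrix $H_1$ (whose rows span $V_1^{\perp}$), your hyperplanes $H_j=\pi(V_j)$ are the column spans of the paper's $H_1G_j$, i.e.\ the orthogonal complements of its normal vectors $g_j$, and your point--hyperplane incidence count is, by duality, the same computation as the paper's expected number of $g_j$'s lying in a random hyperplane $W\subset\F_q^{\ell+1}$. Your translation of the two niceness properties is correct (including the observation that $v+V_1$ never meets $V_1$ itself), and your inequality $(m-1)(q^{\ell}-1)\le L(q^{\ell+1}-1)$ is in fact the \emph{exact} version of the paper's expectation step, which silently replaces the true probability $\frac{q^{\ell}-1}{q^{\ell+1}-1}$ by $\frac{1}{q}$.

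The genuine flaw is in your endgame: as written, the argument does not cover all $L$. Relaxing the ratio to $q+1$ gives $m\le L(q+1)+1$, which implies $(L+1)q$ only for $L\le q-1$, while the packing fallback $m\le\frac{q^{2\ell+1}-1}{q^{\ell}-1}\approx q^{\ell+1}$ only beats $(L+1)q$ once $L\gtrsim q^{\ell}$. For $\ell\ge 2$ this leaves the whole window $q\le L\le q^{\ell}-1$ uncovered (e.g.\ $\ell=2$, $L=2q$: your main bound gives $2q^2+2q+1$, the claim is $2q^2+q$, and packing gives roughly $q^3$). The repair lies inside your own inequality: do not discard precision. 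Since $\frac{q^{\ell+1}-1}{q^{\ell}-1}=q+\frac{q-1}{q^{\ell}-1}$, you get $m\le Lq+L\,\frac{q-1}{q^{\ell}-1}+1\le (L+1)q$ whenever $L\le q^{\ell}-1$, and the packing bound (whose integer part is $q^{\ell+1}+q$ for $\ell\ge2$) handles $L\ge q^{\ell}$; for $\ell=1$ packing handles $L\ge q+1$. Curiously, the one remaining point $\ell=1$, $L=q$ cannot be repaired at all: the collection of \emph{all} $q^2+q+1$ lines through the origin of $\F_q^3$ is $q$-nice (an affine line $v+V_i$, $v\notin V_i$, meets exactly $q$ other members), so $m(q,1,q)=q^2+q+1=L(q+1)+1$, matching your bound exactly and exceeding the stated $(L+1)q$ by one --- an edge case that the paper's own proof glosses over via the $\frac{1}{q}$ approximation, and which shows your ``arithmetic subtlety'' is not mere sloppiness but the actual truth of the matter at that boundary.
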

	We postpone the proof of Proposition~\ref{lower bound on card} to Appendix. The proof of Proposition~\ref{upper bound on card}, suggested by Mary Wootters, is included to Appendix for completeness of the paper.
	
	Finally Lemma~\ref{connection} and Proposition~\ref{lower bound on card} imply the following upper bound on the redundancy of batch codes.
	\begin{theorem}\label{construction in dimensional space}
		For any $\ell\in \N$, prime power integer $q$ and integer $k$, $0<k\le \lfloor q/\ell^2\rfloor$, the redundancy of $k$-batch codes is upper bounded by
		$$
		r_B(n,k) \le \ell k q^{\ell+ 1},
		$$
		where $n=q^{2\ell+1}$.
	\end{theorem}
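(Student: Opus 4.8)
The plan is to instantiate Lemma~\ref{connection} with a sub-collection of the subspaces produced by Construction~\ref{construction}, sized so that the resulting batch parameter comes out exactly to $k$ while the redundancy scales linearly in $k$. By Proposition~\ref{lower bound on card}, Construction~\ref{construction} supplies an $\ell$-nice collection $\cC$ of $\ell$-dimensional subspaces in $\F_q^{2\ell+1}$ with $|\cC| = \lfloor q/\ell \rfloor$; in the terminology of Definition~\ref{nice def} this is $L$-nice with $L = \ell$. The key observation is that feeding all of $\cC$ into Lemma~\ref{connection} would give a redundancy of order $q^{\ell+2}/\ell$ independently of $k$, which is wasteful when $k$ is small. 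Instead I would pass to a sub-collection of the right cardinality.

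First I would record that both defining properties in Definition~\ref{nice def} are hereditary: restricting to fewer subspaces can neither create a nontrivial pairwise intersection nor increase the number of subspaces met by any affine flat $v+V_i$. Hence every sub-collection $\cC' \subseteq \cC$ is again $\ell$-nice.

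Next I would set $m' := \ell k$ and choose any $\cC' \subseteq \cC$ with $|\cC'| = m'$. Such a choice exists because the hypothesis $k \le \lfloor q/\ell^2 \rfloor \le q/\ell^2$ gives $\ell k \le q/\ell$, and since $\ell k$ is an integer this forces $\ell k \le \lfloor q/\ell \rfloor = |\cC|$. Applying Lemma~\ref{connection} to the $\ell$-nice collection $\cC'$ (so $L=\ell$) produces a $[\,q^{2\ell+1} + m' q^{\ell+1},\, q^{2\ell+1},\, \lfloor m'/\ell \rfloor\,]^B$ code, and here $\lfloor m'/\ell \rfloor = \lfloor \ell k / \ell \rfloor = k$. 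Reading off the parameters with $n = q^{2\ell+1}$ then yields $r_B(n,k) \le m' q^{\ell+1} = \ell k q^{\ell+1}$, which is exactly the claimed bound.

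I expect no genuine obstacle here, since the substantive work is already carried by Lemma~\ref{connection} and Proposition~\ref{lower bound on card} and this argument is merely a matching of parameters. The two points deserving a moment of care are the floor inequality $\ell k \le \lfloor q/\ell \rfloor$, which certifies that a sub-collection of size $\ell k$ actually exists, and the (routine) remark that $\ell$-niceness is preserved under passing to subsets.
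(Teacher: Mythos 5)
Your proposal is correct and follows essentially the same route as the paper: take a sub-collection of size $\ell k$ from the $\ell$-nice collection of Construction~\ref{construction} guaranteed by Proposition~\ref{lower bound on card}, and apply Lemma~\ref{connection} to obtain a $[q^{2\ell+1}+\ell k q^{\ell+1}, q^{2\ell+1}, k]^B$ code. You are in fact slightly more careful than the paper, spelling out the hereditary nature of niceness and the floor inequality $\ell k \le \lfloor q/\ell \rfloor$, both of which the paper leaves implicit.
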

	\begin{remark}
		Proposition~\ref{upper bound on card} verifies that the proposed framework based on finite geometry could not be significantly improved  in terms of the range of parameter $k$ in Theorem~\ref{construction in dimensional space}, that is $k$ could not be larger than $\lfloor(L+1)q/L\rfloor$. 
	\end{remark}
	\begin{proof}[Proof of Theorem~\ref{construction in dimensional space}]
		From Proposition~\ref{lower bound on card} it follows that there exists  an $\ell$-nice collection of $\ell$-dimensional subspaces in $\F_q^{2\ell+1}$, which has cardinality $\lfloor q / \ell \rfloor$. Take any subset of this collection of size $m=\ell k$, where $k\le \lfloor q / \ell^2 \rfloor$.   Lemma~\ref{connection} states that there exists a $[q^{2\ell + 1} + \ell kq^{\ell+1},q^{2\ell + 1}, k]^B$ code. This completes the proof.
	\end{proof}
	Let us demonstrate how Theorem~\ref{construction in dimensional space} actually works.
	\begin{example} Let  $q=3$, $\ell = 1$ and $k=3$. Then $n=3^3=27$. Denote by $\F_3 = \{0,1,2\}$. Let us index $n$ information symbols by vectors of $\F_3^3$, i.e., $x_{000},x_{001},\ldots,x_{222}$. First we define three direction vectors $(1, 0, 0)$, $(1,1,1)$ and $(1,2,1)$, which are linearly independent. We shall construct a systematic linear code. One can determine $kn^{2/3}=27$ parity-check bits as sums of information bits which indexes lye on lines with given direction vectors. These lines represent distinct $1$-dimensional affine subspaces of $\F_3^3$. For instance, there are $9$ lines with direction vector $(1,2,1)$. Let us take one which goes through point $(0,1,2)$. Then the corresponding parity-check bit is $y_{i'}=x_{012}+x_{100} + x_{221}$ and the recovering set for $x_{012}$ based on this parity-check bit is $\{i', 100, 221\}$. It is easy to show that there are $2$ other simple recovering sets for $x_{012}$, which are of the form $\{i'',112, 212\}$ and $\{i''', 120, 201\}$. Moreover, each information bit has exactly $3$ simple recovering sets. For every bit, each of its recovering sets has a nonempty intersection with at most one recovering set of any other bit. This property immediately implies~\cite{rawat2016batch} that our code is a $3$-batch code. For $\ell>1$, in the proof of Lemma~\ref{connection} we will show a generalization of this property.
	\end{example}	
	\section{Conclusion}\label{conclusion}
	In this paper new random coding bound and new explicit constructions of primitive linear batch codes based on finite geometry were developed. In some parameter regimes, our codes improves the redundancy than previously known batch codes. We note that the random coding bound coincides with the constructive bound in a countable number of points and gives better result in others. 	The natural open question arose in this work is to construct codes which would achieve random coding bound in all others points too. Another interesting question is how to improve the lower bound given by inequality~\eqref{lower bound on r}.  
	
	\section*{Acknowledgment}
	We thank Eitan Yaakobi for the fruitful discussion on batch codes and Mary Wootters for the proof of Proposition~\ref{upper bound on card}.
	N. Polyanskii was supported in part the Russian Foundation for Basic Research (RFBR) through grant nos. 18-07-01427~A, 18-31-00310~MOL\_A. I. Vorobyev was supported in part by RFBR through grant nos. 18-07-01427~A, 18-31-00361~MOL\_A.
	\bibliographystyle{IEEEtran}
	\bibliography{batch}

\begin{thebibliography}{10}
\providecommand{\url}[1]{#1}
\csname url@samestyle\endcsname
\providecommand{\newblock}{\relax}
\providecommand{\bibinfo}[2]{#2}
\providecommand{\BIBentrySTDinterwordspacing}{\spaceskip=0pt\relax}
\providecommand{\BIBentryALTinterwordstretchfactor}{4}
\providecommand{\BIBentryALTinterwordspacing}{\spaceskip=\fontdimen2\font plus
\BIBentryALTinterwordstretchfactor\fontdimen3\font minus
  \fontdimen4\font\relax}
\providecommand{\BIBforeignlanguage}[2]{{%
\expandafter\ifx\csname l@#1\endcsname\relax
\typeout{** WARNING: IEEEtran.bst: No hyphenation pattern has been}%
\typeout{** loaded for the language `#1'. Using the pattern for}%
\typeout{** the default language instead.}%
\else
\language=\csname l@#1\endcsname
\fi
#2}}
\providecommand{\BIBdecl}{\relax}
\BIBdecl

\bibitem{ishai2004batch}
Y.~Ishai, E.~Kushilevitz, R.~Ostrovsky, and A.~Sahai, ``Batch codes and their
  applications,'' in \emph{Proceedings of the thirty-sixth annual ACM symposium
  on Theory of computing}.\hskip 1em plus 0.5em minus 0.4em\relax ACM, 2004,
  pp. 262--271.

\bibitem{asi2018nearly}
H.~Asi and E.~Yaakobi, ``Nearly optimal constructions of pir and batch codes,''
  \emph{IEEE Transactions on Information Theory}, 2018.

\bibitem{rawat2016batch}
A.~S. Rawat, Z.~Song, A.~G. Dimakis, and A.~G{\'a}l, ``Batch codes through
  dense graphs without short cycles,'' \emph{IEEE Transactions on Information
  Theory}, vol.~62, no.~4, pp. 1592--1604, 2016.

\bibitem{vardy2016constructions}
A.~Vardy and E.~Yaakobi, ``Constructions of batch codes with near-optimal
  redundancy,'' in \emph{Information Theory (ISIT), 2016 IEEE International
  Symposium on}.\hskip 1em plus 0.5em minus 0.4em\relax IEEE, 2016, pp.
  1197--1201.

\bibitem{bhattacharya2012combinatorial}
S.~Bhattacharya, S.~Ruj, and B.~Roy, ``Combinatorial batch codes: A lower bound
  and optimal constructions,'' \emph{Advances in Mathematics of
  Communications}, vol.~6, no.~2, pp. 165--174, 2012.

\bibitem{brualdi2010combinatorial}
R.~A. Brualdi, K.~P. Kiernan, S.~A. Meyer, and M.~W. Schroeder, ``Combinatorial
  batch codes and transversal matroids,'' \emph{Advances in Mathematics of
  Communications}, vol.~4, no.~3, pp. 419--431, 2010.

\bibitem{silberstein2016optimal}
N.~Silberstein and A.~G{\'a}l, ``Optimal combinatorial batch codes based on
  block designs,'' \emph{Designs, Codes and Cryptography}, vol.~78, no.~2, pp.
  409--424, 2016.

\bibitem{stinson2009combinatorial}
D.~Stinson, R.~Wei, and M.~B. Paterson, ``Combinatorial batch codes,''
  \emph{Advances in Mathematics of Communications}, vol.~3, no.~1, pp. 13--27,
  2009.

\bibitem{bujtas2011combinatorial}
C.~Bujt{\'a}s and Z.~Tuza, ``Combinatorial batch codes: Extremal problems under
  hall-type conditions,'' \emph{Electronic Notes in Discrete Mathematics},
  vol.~38, pp. 201--206, 2011.

\bibitem{wang2013codes}
Z.~Wang, O.~Shaked, Y.~Cassuto, and J.~Bruck, ``Codes for network switches,''
  in \emph{Information Theory Proceedings (ISIT), 2013 IEEE International
  Symposium on}.\hskip 1em plus 0.5em minus 0.4em\relax IEEE, 2013, pp.
  1057--1061.

\bibitem{buzaglo2018consecutive}
S.~Buzaglo, Y.~Cassuto, P.~H. Siegel, and E.~Yaakobi, ``Consecutive switch
  codes,'' \emph{IEEE Transactions on Information Theory}, vol.~64, no.~4, pp.
  2485--2498, 2018.

\bibitem{chee2015combinatorial}
Y.~M. Chee, F.~Gao, S.~T.~H. Teo, and H.~Zhang, ``Combinatorial systematic
  switch codes,'' in \emph{Information Theory (ISIT), 2015 IEEE International
  Symposium on}.\hskip 1em plus 0.5em minus 0.4em\relax IEEE, 2015, pp.
  241--245.

\bibitem{wang2015optimal}
Z.~Wang, H.~M. Kiah, and Y.~Cassuto, ``Optimal binary switch codes with small
  query size,'' in \emph{Information Theory (ISIT), 2015 IEEE International
  Symposium on}.\hskip 1em plus 0.5em minus 0.4em\relax IEEE, 2015, pp.
  636--640.

\bibitem{fazeli2015pir}
A.~Fazeli, A.~Vardy, and E.~Yaakobi, ``Pir with low storage overhead: coding
  instead of replication,'' \emph{arXiv preprint arXiv:1505.06241}, 2015.

\bibitem{blackburn2017pir}
S.~R. Blackburn and T.~Etzion, ``Pir array codes with optimal pir rates,'' in
  \emph{Information Theory (ISIT), 2017 IEEE International Symposium on}.\hskip
  1em plus 0.5em minus 0.4em\relax IEEE, 2017, pp. 2658--2662.

\bibitem{rao2016lower}
S.~Rao and A.~Vardy, ``Lower bound on the redundancy of pir codes,''
  \emph{arXiv preprint arXiv:1605.01869}, 2016.

\bibitem{zhang2016private}
Y.~Zhang, X.~Wang, H.~Wei, and G.~Ge, ``On private information retrieval array
  codes,'' \emph{arXiv preprint arXiv:1609.09167}, 2016.

\bibitem{lin2001error}
S.~Lin and D.~J. Costello, \emph{Error control coding}.\hskip 1em plus 0.5em
  minus 0.4em\relax Pearson Education India, 2001.

\bibitem{wang2015achieving}
A.~Wang, Z.~Zhang, and M.~Liu, ``Achieving arbitrary locality and availability
  in binary codes,'' in \emph{Information Theory (ISIT), 2015 IEEE
  International Symposium on}.\hskip 1em plus 0.5em minus 0.4em\relax IEEE,
  2015, pp. 1866--1870.

\bibitem{pamies2013locally}
L.~Pamies-Juarez, H.~D. Hollmann, and F.~Oggier, ``Locally repairable codes
  with multiple repair alternatives,'' \emph{arXiv preprint arXiv:1302.5518},
  2013.

\bibitem{rawat2016locality}
A.~S. Rawat, D.~S. Papailiopoulos, A.~G. Dimakis, and S.~Vishwanath, ``Locality
  and availability in distributed storage,'' \emph{IEEE Transactions on
  Information Theory}, vol.~62, no.~8, pp. 4481--4493, 2016.

\bibitem{wootters2016}
M.~Wootters, ``Linear codes with disjoint repair groups,'' \emph{personal
  communication}, 2016.

\bibitem{wootters2018}
------, ``Note on subspaces with nice properties,'' \emph{personal
  communication}, 2018.

\end{thebibliography}
	\appendix \label{proofs}
	\begin{proof}[Proof of Lemma~\ref{connection}]
		Let $\cC := \{V_1,\ldots, V_m\}$ be an $L$-nice collection of $\ell$-dimensional subspaces of $\F_q^{2\ell+1}$. Now we construct a systematic linear code of length $N:=q^{2\ell+1}+mq^{\ell+1}$ and dimension $n:=q^{2\ell+1}$. First, we associate $n$ information symbols with $n$ vectors in $\F_q^{2\ell+1}$. For every affine subspace of form $v+V_i$, $v\in\F_q^{2\ell+1}$, $V_i\in\cC$, we define a parity-check symbol as a sum of information symbols lying in this affine subspace. One can easily see that each information symbol is involved in $m$ different parity-checks. From this and the fact that for any $v\in\F_q^{2\ell+1}$, any two distinct affine subspaces $v+V_i$ and $v+V_j$ have only trivial intersection in $v$, it follows that each information symbol has $m$ mutually disjoint simple recovering sets. The total number of parity-check bits is 
		$$
		\frac{|\F_q^{2\ell+1}||\cC|}{|V_i|}=mq^{\ell+1}.
		$$
		Since collection $\cC$ is $L$-nice, we conclude that for every bit, each of its simple recovering sets has a nonempty intersection with at most $L$ simple recovering set of any other bit. Therefore, for any multiset request of information symbols of size at most $\lfloor m/L \rfloor$, we are able to construct simple recovering sets. This completes the proof.
	\end{proof}
	\begin{remark}
		We emphasize that the property we used in the proof of Lemma~\ref{connection} allows to construct recovering sets in an arbitrary order, that is a multiset request of information symbols could be given bit-by-bit and the corresponding recovering sets could be output bit-by-bit also.
	\end{remark}
	\begin{proof}[Proof of Proposition~\ref{lower bound on card}]
		Let us prove that collection $\cC$ is $\ell$-nice. 
		
		First, we need to show property $1)$ in Definition~\ref{nice def}.  Indeed if two distinct subspaces $V_i$ and $V_j$, $i\ne j$, have a nontrivial intersection, then the matrix   
		$$
		\begin{bmatrix}
		1& \alpha^{\ell i}&\alpha^{2(\ell i)}&\ldots&  \alpha^{{2\ell(\ell i)}} \\
		1& \alpha^{\ell i+1}&\alpha^{2(\ell i+1)}&\ldots&  \alpha^{{2\ell(\ell i+1)}} \\
		\hdotsfor{5} \\
		1& \alpha^{\ell i+\ell-1}&\alpha^{2(\ell i+\ell-1)}&\ldots&  \alpha^{{2\ell(\ell i+\ell-1)}} \\
		1& \alpha^{\ell j}&\alpha^{2(\ell j)}&\ldots&  \alpha^{{2\ell(\ell j)}} \\
		1& \alpha^{\ell j+1}&\alpha^{2(\ell j+1)}&\ldots&  \alpha^{{2\ell(\ell j+1)}} \\
		\hdotsfor{5} \\
		1& \alpha^{\ell j+\ell-1}&\alpha^{2(\ell j+\ell-1)}&\ldots&  \alpha^{{2\ell(\ell j+\ell-1)}} \\
		\end{bmatrix}
		$$
		is of rank $<2\ell$. However, this is a Vandermonde matrix with distinct elements in the second column, and thus it has maximum rank $2\ell$. Additionally, we see that all vector subspaces in $\cC$ are $\ell$-dimensional.
		
		Second, let us check property $2)$  in Definition~\ref{nice def}.
		Seeking a contradiction, suppose $v+V_{i_0}$ overlaps with $\ell+1$ distinct vector subspaces $V_{i_1}, V_{i_2},\ldots,  V_{i_{\ell+1}}$. It follows that vector $v$ belongs to $\ell+1$ subspaces $V_{i_0}\oplus V_{i_1}, V_{i_0}\oplus V_{i_2},\ldots, V_{i_0}\oplus V_{i_{\ell+1}}$, where $\oplus$ is the direct sum of subspaces. We know that vector $\mathbf{v}$ does not belong to $V_{i_0}$. Thus, $2\ell$-dimensional subspaces $V_{i_0}\oplus V_{i_u}$, $u\in[\ell+1]$, are intersected by an $(\ell+1)$-dimensional subspace $V'$. Let $g_u\in \F_q^{2\ell+1}$, $u\in[\ell+1]$, be the vector whose coordinates are coefficients (from the constant coefficient  to the leading coefficient) of the polynomial
		\begin{multline*}
		f_u(x) = (x-\alpha^{\ell i_0})(x-\alpha^{\ell i_0+1})\ldots (x-\alpha^{\ell i_0+\ell-1})	\\\times(x-\alpha^{\ell i_u})(x-\alpha^{\ell i_u+1})\ldots (x-\alpha^{\ell i_u+\ell-1}).
		\end{multline*}
		Since the inner products $\langle g_u,v^{i_u}_j\rangle = f_u(\alpha^{\ell i_u+j})= 0  $ and $\langle g_u,v^{i_0}_j\rangle = f_u(\alpha^{\ell i_0 + j})= 0$, we get that $g_u$ is orthogonal to $V_{i_0}\oplus V_{i_u}$.
		The condition $\dim V' = \ell+1$ is equivalent to a linear dependency of the system of vectors $\{g_u, \,u\in[\ell+1]\}$. Therefore, there exist non-trivial coefficients $c_u\in \F_q$, $u\in[\ell+1]$, so that the linear combination 
		$$
		\sum_{u=1}^{\ell+1}c_u f_u(x)=0
		$$
		vanishes.
		Since all $f_u(x)$, $u\in[\ell+1]$, share the same polynomial $(x-\alpha^{\ell i_0})\ldots (x-\alpha^{\ell i_0+\ell-1})$ and the size of the field is sufficiently large, we conclude
		$$
		\sum_{u=1}^{\ell+1}c_u	(x-\alpha^{\ell i_u})(x-\alpha^{\ell i_u+1})\ldots (x-\alpha^{\ell i_u+\ell-1}) = 0.
		$$
		This equality yields that either the matrix
		$$
		\begin{bmatrix}
		1& \alpha^{\ell i_1}&\alpha^{2(\ell i_1)}&\ldots&  \alpha^{{\ell(\ell i_1)}} \\
		1& \alpha^{\ell i_2}&\alpha^{2(\ell i_2)}&\ldots&  \alpha^{{\ell(\ell i_2)}}\\
		\hdotsfor{5} \\
		1& \alpha^{\ell i_{\ell+1}}&\alpha^{2(\ell i_{\ell+1})}&\ldots&  \alpha^{{\ell(\ell i_{\ell+1})}} \\
		\end{bmatrix}
		$$
		is singular or there is at least one zero coefficient of polynomial $g_{\alpha}(x)=(x-1)(x-\alpha)\ldots(x-\alpha^{\ell-1})$. However, the matrix is a Vandermonde matrix with distinct elements in the second column, whereas $g_\alpha(x)$ is a generator polynomial of Reed-Solomon code and all its coefficients are nonzero. This contradiction completes the proof.
	\end{proof} 
	\begin{proof}[Proof of Proposition~\ref{upper bound on card}]
		Let $\cC:=\{V_1,\ldots, V_m\}$ be an $L$-nice collection of $\ell$-dimensional subspaces in $\F_q^{2\ell+1}$. For each $i$, let $G_i\in \F^{(2\ell+1)\times\ell}_q$ be a matrix whose
		columns span $V_i$, and let $H_i\in \F^{(\ell+1)\times (2\ell+1)}_q$ be a matrix whose rows span $V_i^{\perp}$. Since $\cC$ is $L$-nice, we
		have the following property: for all $i\in [m]$, for all $v\in \F^{2\ell+1}_q$, $v\not\in V_i$, the number of $j$ so that $V_j$ intersects
		$v+V_i$ is at most $L$.
		
		Fix some (arbitrary) $i$. For $j\in[m]\setminus \{i\}$, notice that $H_iG_j\in \F^{(\ell+1)\times\ell}_q$ has full column rank because $V_i$ and
		$V_j$ do not intersect non-trivially, so let $g_j$ be a nonzero vector in the one-dimensional subspace perpendicular
		to the column span of $H_iG_j$, that is $g^T_j H_iG_j = 0$. Let $G\in \F^{(m-1)\times(\ell+1)}_q$ be the matrix with the $g_j$'s as its
		rows.
		
		Next, we shall prove that for any vector $w\in \F^{\ell+1}_q$, $Gw$ has at most $L$ zeros. To see this, suppose that $\langle g_j ,w\rangle = 0$,
		that is, the $j$th element of $Gw$ is zero. By definition, this means that $w$ is in the column span of $H_iG_j$ , say
		that $w = H_iG_jy$ for some $y\in \F^\ell_q$. Let $v\in F^{2\ell+1}_q$ be such that $H_iv = w$, so we have
		$$
		H_iG_jy = H_iv
		$$
		which means that
		$$
		G_jy = v + G_ix
		$$
		for some $x$ (that is, $G_jy$ and $v$ differ by something in the kernel of $H_i$ which is the image of $G_i$). But this
		means precisely that $V_i + v$ and $V_j$ intersect. Since there are at most $L$ values of $j$ so that $V_i + v$ and $V_j$
		intersect, we conclude that there are at most $L$ values of $j$ so that the $j$th element of $Gw$ is zero, which
		proves the claim.
		
		Thus, whenever we have a collection of $m$ subspaces with the desired property, we have a matrix $G\in \F^{(m-1)\times(\ell+1)}_q$ so that every $L + 1$ rows of $G$ have rank $\ell + 1$; otherwise there would be some $w$ in the kernel
		of those rows that result in a vector $Gw$ with too many zeros. We claim that we must have $m\le (L + 1)q$ for
		such a matrix $G$ to exist. Indeed, let $W$ be a random subspace of dimension $\ell$ in $F^{\ell+1}_q$ , and observe that
		$$
		\expect|W \setminus \{g_1, \ldots, g_{m-1}\}| =
		\sum_{j=1}^{m-1} \Pr \{g_j \in W\} =
		\frac{m-1}{q}
		$$
		so if $m\ge(L + 1)q + 1$, there exists some subspace $W$ so that
		$|W \setminus  \{g_1,\ldots,g_{m-1}\}|\ge   L + 1$. This contradiction completes the proof. 
	\end{proof}
\end{document}